\newcommand{\ket}[1]{\left \vert #1 \right \rangle}
\newcommand{\bra}[1]{\left \langle #1 \right \vert}
\newcommand{\hh}{\mathcal{H}}
\newcommand{\id}{\mathbb{I}}
\newcommand{\cc}{\mathbbm{C}}
\newcommand{\xx}{\mathbf{x}}
\newcommand{\yy}{\mathbf{y}}
\newcommand{\St}{{\mathcal S}}
\DeclareMathOperator{\poly}{poly}
\DeclareMathOperator{\Eq}{EQ}
\newcommand{\ketbra}[2]{\left \vert #1 \right \rangle \! \!\left \langle #2 \right \vert}
\newtheorem{theorem}{Theorem}
\newtheorem{lemma}[theorem]{Lemma}
\newtheorem{corollary}[theorem]{Corollary}
\theoremstyle{definition}
\newtheorem{example}[theorem]{Example}
\newtheorem{definition}[theorem]{Definition}
\DeclareMathOperator{\tr}{tr}
\DeclareMathOperator{\vspan}{span}
\begin{document}

\title{Area laws and efficient descriptions of quantum many-body states}
\author{Yimin Ge}
\affiliation{Max-Planck-Institut  f{\"u}r Quantenoptik, D-85748 Garching, Germany}

\author{Jens Eisert}
\affiliation{Dahlem Center for Complex Quantum Systems, Freie Universit{\"a}t Berlin, D-14195 Berlin, Germany}


\begin{abstract}
It is commonly believed that area laws for entanglement entropies imply that a quantum many-body state can be faithfully represented by efficient tensor network states -- a conjecture frequently stated in the context of numerical simulations and analytical considerations. In this work, we show that this is in general not the case, except in one dimension. We prove that the set of quantum many-body states that satisfy an area law for all Renyi entropies contains a subspace of exponential dimension. 
Establishing a novel link between quantum many-body theory and the theory of communication complexity,
we then show that there are states satisfying area laws for all Renyi entropies but cannot be approximated by states with a classical description of small Kolmogorov complexity, including polynomial projected entangled pair states (PEPS) or states of multi-scale entanglement renormalisation (MERA). Not even a quantum computer with post-selection can efficiently prepare all quantum states fulfilling an area law, and we show that not all area law states can be eigenstates of local Hamiltonians. We also prove translationally invariant and isotropic instances of these results, and show a variation with decaying correlations using quantum error-correcting codes.
\end{abstract}
\maketitle

\section{Introduction} 

 Complex interacting quantum systems show a wealth of 
exciting phenomena, ranging from phase transitions of zero temperature to notions of topological order. A significant proportion of condensed matter physics is 
concerned with understanding the features and properties emergent in quantum lattice systems with local interactions. Naive numerical descriptions of such 
quantum systems with many degrees of freedom require prohibitive resources, however, for the simple
reason that the dimension of the underlying Hilbert space grows exponentially in the system size. 

It has  
become clear in recent years,
however, that ground states -- and a number of other natural states -- 
usually occupy only 
a tiny fraction of this Hilbert space, 
sometimes referred to as its
``physical corner'' {(Fig.~\ref{subfig:corner1})}. This subset is commonly characterised by states having little entanglement by exhibiting an area law \cite{AreaRMP}: 
entanglement entropies are expected to 
grow only like the  boundary area of any subset $A$ of 
lattice sites,
  \begin{equation}
	S(\rho_A) = O(|\partial A|),
\end{equation}
and not extensively like its volume $|A|$ (Fig.~\ref{fig:peps}). Such area laws have been proven for all gapped models in $D=1$ \cite{hastings07,ALV12,AKLV13,BH13}, for free gapped bosonic and fermionic models
in $D>1$ \cite{Area,Area2,Area3}, for ground states of gapped models in the same phase as ones satisfying an area law \cite{vAMV13,MAAV14},
those which have a suitable scaling for heat capacities \cite{HeatCapacity} or for which the Hamiltonian spectra satisfy related conditions \cite{HastingsEntropyEntanglement,MasanesArea},
frustration-free spin models \cite{FF},
and ones that exhibit local topological order \cite{michalakis2012}. 
The general expectation is that all gapped lattice models satisfy such a behaviour -- proving a general area law
for gapped lattice models in $D\geq 2$ has indeed become a milestone open problem in theoretical physics.

This behaviour 
is at the core of powerful numerical algorithms, 
such as the density-matrix renormalisation group approach \cite{schollwoeck2011} 
and higher dimensional analogues \cite{VMC08}. 
In $D=1$,
the situation is particularly clear: 
Matrix-product states essentially ``parameterise'' those one-dimensional quantum states that satisfy an area law for some Renyi entropy $S_\alpha$ with $\alpha\in(0,1)$. They approximate such states provably well, which explains why essentially machine precision can be reached with such numerical tools \cite{SchuchApprox,VC06}. 
Analogously, 
a common  jargon is that higher dimensional analogues -- 
projected entangled pair states (PEPS) -- 
 can approximate states satisfying area laws, 
 for the same reasoning and with analogue implications. In  
the same way, one expects those
instances of tensor network states to capture the ``physical corner''.

\begin{figure}[t!]
\includegraphics[width=0.3\textwidth]{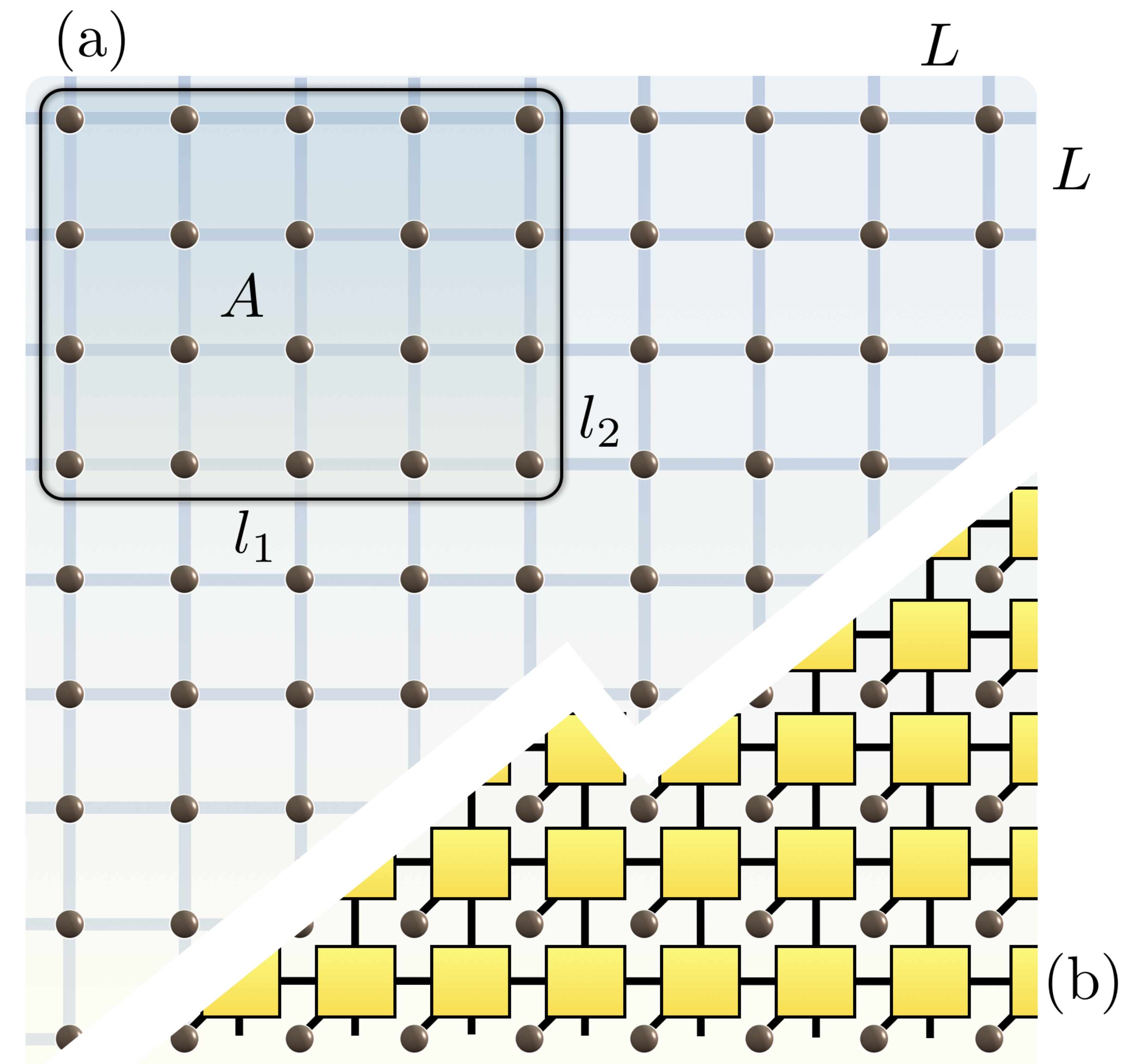}
\caption{(a) There exist quantum states on $D$-dimensional cubic lattices in $D\geq 2$ such that $S_\alpha(\rho_A)=O(|\partial A|)$ for all
$\alpha>0$, but which cannot be approximated by efficient tensor network states, such as (b) polynomial projected entangled pair states.}\label{fig:peps}
\end{figure}

 In this work, we show that this jargon is not quite right: Strictly speaking, area laws and approximability with tensor network states 
 are unrelated. There are even states that satisfy an area
 law for \emph{every} Renyi entropy \footnote{Here, $S_1=\lim_{\alpha\downarrow 1}S_\alpha=S$ is the familiar von-Neumann entropy and $S_0$ the binary logarithm of the Schmidt rank.}
 \begin{equation}
 	S_\alpha = \frac{1}{1-\alpha}\log_2 \tr(\rho^\alpha), \quad \alpha\in[0,\infty),
\end{equation}	
but still, no efficient PEPS can be found. The same holds for multi-scale entanglement renormalisation (MERA) ansatzes \cite{MERA1}, 
as well as
all tensor network states that have a short description  
with
low Kolmogorov complexity. Not even a quantum computer with post-selection
can prepare all states satisfying area laws. Moreover, not all states satisfying area laws are eigenstates of local Hamiltonians.

The main result of this {work}, which underlies these conclusions, is that  in 
$D\geq 2$,
 the set of states satisfying area laws for all $S_\alpha$ contains a subspace whose dimension scales exponentially with the system size. Bringing the study of many-body states and tensor network states into contact with the theory of 
\emph{communication complexity}  
and \emph{Kolmogorov complexity} \cite{Kolmogorov}, it can then be inferred that this subspace cannot be parameterised by polynomial classical descriptions only.

By no means, however, is this result meant to indicate that area laws are not appropriate intuitive guidelines for
approximations with tensor network states.
It is rather aimed to be a significant step towards precisely delineating the boundary between those quantum many-body 
states that can be efficiently captured and those that cannot, and we contribute to the discussion why PEPS and other tensor network states
approximate natural states so well.
Area laws without further qualifiers are, strictly speaking, inappropriate for this purpose as the ``corner'' they parameterise is exponentially large. This work is hence 
 a strong reminder that the programme of identifying that boundary is not 
 finished yet.

\section{Area laws and the exponential ``corner'' of Hilbert space}
Throughout this work, we consider quantum lattice systems of local dimension $d$, arranged on a cubic lattice $[L]^{D}$ of  
dimension  $D>1$, where $[L]:=\{0,\dots, L-1\}$. 
The case $D=1$ is excluded since in this case, the question at hand has already been settled
with the opposite conclusion \cite{SchuchApprox,VC06}. 
The local dimension is small and taken to be $d=3$
for most of this work, 
there is no obvious
fundamental reason, however, why such a
construction should not also be possible for $d=2$.

In the focus of attention are states that satisfy an area law for all Renyi  
entropies, including the  
von Neumann entropy. 

\begin{definition}[Strong area laws] \label{def:arealaw}
 A pure state  $\psi \in \St((\cc^d)^{\otimes L^{D}})$ is said to satisfy a \emph{strong area law} if there exists a universal constant $c$ such that for all  
regions $A\subset [L]^D$, we have
 $S_0(\psi_A)\leq c|\partial A|$. 
\end{definition}

Since $S_\alpha(\rho)\leq S_0(\rho)$ for all $\alpha>0$, 
strong area law states 
in this sense also 
exhibit area laws for all {other} Renyi entropies.  Definition~\ref{def:arealaw} is hence even stronger than the area laws usually quoted \cite{AreaRMP,SchuchApprox,VC06}.
{Here and later, we write $\psi=\ketbra\psi\psi$}.
For simplicity, we will for the remainder of this paper restrict our consideration to cubic regions 
$A$ 
only. It should be clear, however, that all arguments generalise to arbitrary regions $A\subset [L]^D$.

We now turn to {showing} that  the ``physical corner'' of states satisfying area laws in this strong sense is still
very large: It contains subspaces of dimension
$\exp(\Omega(L^{D-1}))$. We prove this by providing a specific class of quantum states that have that property. At the heart of the construction is an embedding
of states defined on a $(D-1)$-dimensional qubit lattice into the $D$-dimensional qutrit one.  
Denote with 
$\hh_L \subset \vspan\{\ket 1,\ket 2\}^{\otimes L^{D-1}}$ the subspace of translationally invariant states on a $(D-1)$-dimensional cubic lattice of $L^{D-1}$ qubits. 
It is easy to show that $\dim (\hh_L)\geq 2^{L^{D-1}}/L^{D-1}$.
We start from the simplest translationally invariant construction on $\hh :=(\cc^3)^{\otimes L^D}$
and discuss isotropic states and decaying correlations below. 

\begin{theorem}[States satisfying strong area laws]\label{thm:areaLaw}
	There exists an injective linear isometry $f:\hh_L\rightarrow \hh$ 
	with the property that  for all  $\ket{\phi_L}\in \hh_L$, $f(\ket{\phi_L})$ satisfies a strong area law and is translationally invariant in all $D$ directions.
\end{theorem}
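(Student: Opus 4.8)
The plan is to construct $f$ by writing the entire $(D-1)$-dimensional state $\ket{\phi_L}$ onto a single lattice slice, filling every other qutrit with a fixed vacuum symbol $\ket 0$, and then symmetrising over the position of that slice along the extra $D$-th coordinate so as to restore translational invariance in all directions. Since $\hh_L\subset\vspan\{\ket1,\ket2\}^{\otimes L^{D-1}}$ already sits inside the qutrit space, $\ket{\phi_L}$ can be placed verbatim. Writing $\Trans_D$ for the unit translation in the $D$-th direction, I would set
\begin{equation}
 f(\ket{\phi_L}):=\frac{1}{\sqrt L}\sum_{k=0}^{L-1}\ket{\psi_k},\qquad \ket{\psi_k}:=\Trans_D^{\,k}\bigl(\ket{\phi_L}\otimes\ket 0^{\otimes(L-1)L^{D-1}}\bigr),
\end{equation}
so that $\ket{\psi_k}$ carries $\ket{\phi_L}$ on the slice $\{x_D=k\}$ and $\ket 0$ on all remaining sites. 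Linearity is built in. Because $\ket{\phi_L}$ is orthogonal to $\ket 0^{\otimes L^{D-1}}$, the states $\ket{\psi_k}$ are mutually orthonormal (for $k\neq k'$ they disagree by a $\ket{\phi_L}$-versus-$\ket 0$ factor on slice $k$), whence $\|f(\ket{\phi_L})\|=\|\ket{\phi_L}\|$; by polarisation $f$ is an isometry and in particular injective.

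I would then verify translational invariance in each of the $D$ directions separately. Under a translation $\Trans_i$ acting within slices ($i<D$), every $\ket{\psi_k}$ is fixed: $\Trans_i$ is the identity on the $\ket 0$ background and leaves $\ket{\phi_L}$ invariant since $\ket{\phi_L}\in\hh_L$ is translationally invariant by hypothesis. Under $\Trans_D$ the summands are merely permuted cyclically, $\Trans_D\ket{\psi_k}=\ket{\psi_{k+1\bmod L}}$, so the symmetric superposition is unchanged. Hence $f(\ket{\phi_L})$ is invariant under all $D$ lattice translations.

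The substantive part is the strong area law, i.e.\ bounding the Schmidt rank across the boundary of an arbitrary cube $A$ by $2^{O(|\partial A|)}$ uniformly in $\ket{\phi_L}$. I would split the sum by whether a slice meets $A$: let $K_{\mathrm{in}}$ collect the slices whose $D$-coordinate lies in the $D$-range of $A$ (exactly $\ell$ of them, with $\ell$ the side length of $A$) and $K_{\mathrm{out}}$ the rest. The key observation, which defeats the naive bound that would cost a spurious $\log_2 L$, is that each $\ket{\psi_k}$ with $k\in K_{\mathrm{out}}$ carries $\ket 0$ on all of $A$, so $\sum_{k\in K_{\mathrm{out}}}\ket{\psi_k}$ factorises as $\ket 0_A\otimes(\cdots)_{A^c}$ and contributes Schmidt rank $1$ in total rather than $|K_{\mathrm{out}}|$. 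For $k\in K_{\mathrm{in}}$ the cut only touches the active slice, so the Schmidt rank of $\ket{\psi_k}$ equals that of $\ket{\phi_L}$ across the induced in-slice cut and is at most $2^{|A\cap s_k|}$, where $s_k$ is slice $k$ and $A\cap s_k$ is a $(D-1)$-cube of side $\ell$. Subadditivity of the Schmidt rank then gives
\begin{equation}
 S_0\bigl(f(\ket{\phi_L})_A\bigr)\le\log_2\Bigl(1+\sum_{k\in K_{\mathrm{in}}}2^{|A\cap s_k|}\Bigr)\le \ell^{D-1}+\log_2(\ell+1).
\end{equation}

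To finish I would invoke the elementary geometric fact that a cube of side $\ell$ has $|\partial A|=\Theta(\ell^{D-1})$ (at least one face bordering $A^c$ already contributes area $\ell^{D-1}$), so both $\ell^{D-1}$ and $\log_2\ell$ are $O(|\partial A|)$ and $S_0(f(\ket{\phi_L})_A)\le c|\partial A|$ with a universal $c$, as required. I expect the genuinely delicate point to be exactly this charging argument: even though $\ket{\phi_L}$ may carry volume-law entanglement inside a slice, it never produces more cut-entanglement than the area of the single face of $A$ perpendicular to the stacking direction, precisely because only one slice is active in each branch of the superposition and the branches whose active slice lies outside $A$ collapse to rank one through their shared factor $\ket 0_A$. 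The isometry and translational-invariance properties, by contrast, should follow immediately from the construction.
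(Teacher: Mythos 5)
Your construction is exactly the paper's: place $\ket{\phi_L}$ on one hyperplane, pad with $\ket 0$, and symmetrise over the $L$ positions of that slice, then bound the Schmidt rank across a cube $A$ by $1+\ell\,2^{\ell^{D-1}}$ (rank $1$ from the slices missing $A$, additive contributions of at most $2^{\ell^{D-1}}$ from each of the $\ell$ slices meeting $A$). The argument is correct and matches the paper's proof in all essentials, including the key observation that the out-of-$A$ branches collapse onto the common factor $\ket 0_A$.
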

\begin{figure}[t!]
\includegraphics[width=0.48\textwidth]{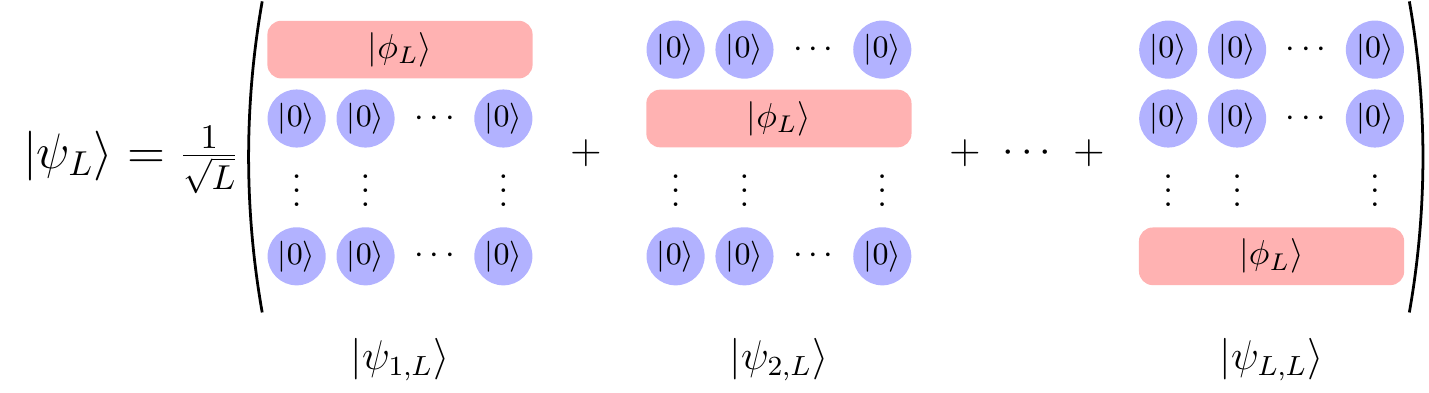}
\caption{Schematic drawing of $\ket{\psi_L}$ in $D=2$. $\ket{\phi_L}$ is an arbitrary translationally invariant state vector on $L^{D-1}$ qubits with basis states $\ket1,\ket2$ in $D-1$ dimensions. Schmidt decompositions for $\ket{\psi_L}$ with respect to bi-partitions of the lattice can be readily obtained from the corresponding Schmidt decompositions of $\ket{\phi_L}$.}
\label{fig:psiLfig}
\end{figure}
\begin{proof}
Given a state vector $\ket{\phi_L}\in\hh_L$, define
\begin{equation}\label{cst}
	|\psi_{k,L}\rangle := |0\rangle^{\otimes (k-1)L^{D-1}}\otimes |\phi_L\rangle \otimes |0\rangle^{\otimes (L-k)L^{D-1}}\in\hh,
\end{equation}
with $\ket{\phi_L}$ at the $k$-th hyperplane of the lattice. 
Define
\begin{equation}\label{fullstate}
	 \ket{\psi_L} := L^{-1/2} \sum_{k=1}^L |\psi_{k,L}\rangle,
\end{equation}
which is translationally invariant (see Fig.~\ref{fig:psiLfig}).
Any such state vector will satisfy a strong area law (in fact, a sub-area law): For any cubic subset $A= [ l_1] \times\cdots\times [l_D]$,
we have
for the reduced state $(\psi_L)_A = \tr_{\bar{A}} (\psi_L)$ that
\begin{align}
	 S_0((\psi_L)_A)&\leq \log_2( 2^{l_1\cdots l_{D-1}}l_D +1)\nonumber\\
	 &\leq 2\sum_{j=1}^D \prod_{k, k\neq j} l_k
	 =|\partial A|,
\end{align}
where we used that the Schmidt rank with respect to the bi-partition $A,\bar A$ for each $\ket{\psi_{k,L}}$ with $ k\in[l_D]$ is at most
$2^{l_1\ldots l_{D-1}}$, and that 
since $\ket{\phi_L}$ is only supported on $\text{span}\{\ket1, \ket2\}$,
the Schmidt vectors of $\ket{\psi_{k,L}}$ and $\ket{\psi_{k',L}}$ are orthogonal for $k\neq k'\in[l_D]$ such that in the distinguished $D$-th direction,
the contribution to the Schmidt rank is additive and thus linear in $l_D$. 
Setting $f(\ket{\phi_L}):=\ket{\psi_L}$, we see that $f$ has the desired properties.
\end{proof}

\section{Classically efficiently described states}
We now turn to efficient classical descriptions of quantum many-body systems. 
The focus is on
 tensor network states, but we will see 
that the notion of an efficient classical description can be formulated in a much more general way. In this {fashion}, we establish a link between 
tensor network states and those quantum states having a small Kolmogorov complexity. 
We then review why the exponential dimension in Theorem~\ref{thm:areaLaw} shows that not all translationally invariant  strong area law states can be approximated by states with a  polynomial classical description. 
For our purposes, the following definition of efficiently describable quantum states will suffice (see also Ref.~\cite{Kolmogorov} for {alternative} definitions). 

\begin{definition}[Classical descriptions]\label{def:classical}
	A \emph{classical description} of a pure quantum state {$\psi\in\St((\cc^d)^{\otimes N})$} 
	is a Turing machine that outputs the list of the coefficients of $\ket\psi$ in the standard basis $\{\ket{\xx}: \xx\in [d]^N\}$ and halts. 
	The \emph{length} of the classical description is the size of the Turing machine. 
	We say that the description is \emph{polynomial} if its length is polynomial in $N$. 
\end{definition}
We emphasise that for a polynomial classical description we only require the \emph{size} of the Turing machine to be polynomial, but not the \emph{run-time} (which is necessarily exponential). 

\begin{example}
[Tensor networks]
\label{ex:tensor-network}
States 	
	that can be written as polynomial tensor networks, i.e., defined on arbitrary graphs with bounded {degree}, having at most $O(\poly(N))$ bond-dimension and 
	whose tensor entries have at most $O(\poly(N))$ Kolmogorov complexity\footnote{Recall that the Kolmogorov complexity of a classical string $w$ is the size of the shortest 
	Turing machine that outputs $w$ and halts. It can be thought of as the shortest possible (classical) description of $w$.} are polynomially classically described states in the sense of Definition~\ref{def:classical}. In particular, PEPS and MERA states with $O(\poly(N))$ bond-dimension and tensor entries of at most $O(\poly(N))$ Kolmogorov complexity are polynomially 
	classically described states.
\end{example}

As a further interesting special case, we highlight that states that can be prepared by polynomial quantum circuits, even with post-selected measurement results, fall under our definition of classically described states. 

\begin{example}[Quantum circuits with post-selection]\label{ex:circuit}
	Suppose that $\ket\psi$ can be prepared by a quantum circuit of $O(\poly(N))$ gates from $\ket0^{\otimes O(\poly(N))}$, where we allow for post-selected measurement results in the computational basis. Then a Turing machine that classically simulates the circuit constitutes a polynomial classical description in the sense of Definition~\ref{def:classical}. 
\end{example}

\begin{example}[Eigenstates of local Hamiltonians]\label{ex:hamiltonian}
	Suppose that $\ket\psi$ is an {eigenvector} of a local Hamiltonian with bounded interaction {strength}. Such Hamiltonians can be 
	specified to arbitrary (but fixed) precision with polynomial Kolmogorov complexity. Thus, a Turing machine that starts from a polynomial 
	description of the Hamiltonian and computes $\ket\psi$ by brute-force 
	diagonalisation constitutes a polynomial {classical} description of $\ket\psi$ in the sense of 
	Definition~\ref{def:classical}. 
\end{example}

Let us now precisely state what we call an approximation of given pure states by polynomially classically described states.
\begin{definition}[Approximation of quantum many-body states]\label{def:approx}
A family of pure states $\rho_N$ can be \emph{approximated by polynomially classically described states} if for all $\varepsilon>0$, there exist a polynomial $p$ and 
pure states $\omega_N$ with a classical description of length at most $p(N)$ such that for all $N$,
\begin{equation}
	\|\rho_N-\omega_N\|_1\leq \varepsilon.
\end{equation}
\end{definition}
Note that this is exactly the sense in which matrix-product states provide an efficient approximation 
of all one-dimensional states that satisfy an area law for some $S_\alpha$ with $\alpha\in(0,1)$ \cite{SchuchApprox}. 
We remark that Definition~\ref{def:approx} can be weakened without altering the results.

\section{Area laws and approximation by efficiently describable states}
\begin{theorem}[Impossibility of approximating area law states]\label{thm:approx}
	Let $\tilde\hh_L$ be a Hilbert space of dimension $\exp(\Omega(\poly(L)))$. Then there exist states in $\tilde\hh_L$ that cannot be approximated by polynomially classically described states. In particular, not all translationally invariant strong area law states can be approximated by polynomially classically described states.
\end{theorem}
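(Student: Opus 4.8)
**

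The plan is to show that the number of distinct polynomially classically described states is too small to cover an $\varepsilon$-net of an exponential-dimensional Hilbert space, and then combine this with Theorem~\ref{thm:areaLaw} to obtain the physical consequence. The key observation is a simple counting argument: a classical description of length at most $p(N)$ is specified by a Turing machine encoded as a binary string of length at most $p(N)$, so the number of states admitting such a description is at most $2^{p(N)+1}$, i.e., merely \emph{exponential} in $\poly(N)$. By contrast, a Hilbert space $\tilde\hh_L$ of dimension $\exp(\Omega(\poly(L)))$ contains an $\varepsilon$-net whose cardinality is \emph{doubly exponential} in $\poly(L)$, since an $\varepsilon$-net of the unit sphere of a $d$-dimensional complex Hilbert space requires $(\Theta(1/\varepsilon))^{\Theta(d)}$ elements.

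First I would fix $\varepsilon>0$ and suppose, for contradiction, that \emph{every} state in $\tilde\hh_L$ can be approximated to within trace-distance $\varepsilon$ by some polynomially classically described state $\omega$ with description length at most $p(L)$ for a fixed polynomial $p$. Then the balls of trace-radius $\varepsilon$ centred at the (at most) $2^{p(L)+1}$ describable states would have to cover the entire unit sphere of $\tilde\hh_L$. Comparing cardinalities, I would use the standard volumetric lower bound on covering numbers of the sphere in dimension $D_L:=\dim\tilde\hh_L$, which gives a covering number at least $c_\varepsilon^{D_L}$ for some constant $c_\varepsilon>1$ depending only on $\varepsilon$. Since $D_L=\exp(\Omega(\poly(L)))$, this lower bound is doubly exponential in $\poly(L)$, whereas the number of available centres $2^{p(L)+1}$ is only singly exponential. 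For $L$ large enough the centres cannot cover the sphere, so there exists a state in $\tilde\hh_L$ at trace-distance greater than $\varepsilon$ from every describable state, contradicting the assumption.

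For the second, physical statement I would invoke Theorem~\ref{thm:areaLaw}: the isometry $f$ maps $\hh_L$ into the space of translationally invariant strong area law states, and since $f$ is an injective linear isometry, its image is a subspace of the strong area law states of dimension $\dim(\hh_L)\geq 2^{L^{D-1}}/L^{D-1}=\exp(\Omega(L^{D-1}))$. This image plays the role of $\tilde\hh_L$ with $\poly(L)=L^{D-1}$. Applying the counting argument above to this subspace yields a translationally invariant strong area law state that cannot be approximated by any polynomially classically described state, and by Examples~\ref{ex:tensor-network}--\ref{ex:hamiltonian} this immediately rules out approximation by polynomial PEPS or MERA, by post-selected polynomial quantum circuits, and by eigenstates of bounded-strength local Hamiltonians.

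The main obstacle I expect is being careful about quantifiers and the order in which $\varepsilon$, the polynomial $p$, and the system size $L$ are chosen, so that Definition~\ref{def:approx} is correctly negated: the definition allows $p$ to depend on $\varepsilon$, so the contradiction must be derived for a \emph{fixed} $\varepsilon$ and the $p$ supplied by that $\varepsilon$, with $L\to\infty$ doing the work. A secondary technical point is making the covering-number comparison rigorous in trace distance rather than Euclidean distance on the sphere; this is routine since the two metrics are equivalent up to constants on pure states, but the constants must be tracked to ensure $c_\varepsilon>1$ and hence that the doubly-exponential versus singly-exponential separation genuinely holds.
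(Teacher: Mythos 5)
Your proposal is correct, and it is essentially a fleshed-out version of the $\varepsilon$-net counting argument that the paper itself sketches in the main text immediately after stating Theorem~\ref{thm:approx}: at most $2^{O(\poly(L))}$ states admit a description of length $O(\poly(L))$, while an $\varepsilon$-net of the pure states of $\cc^q$ needs $(1/\varepsilon)^{\Omega(q)}$ elements, which is doubly exponential when $q=\exp(\Omega(\poly(L)))$. Your handling of the quantifier order (fix $\varepsilon$, obtain $p$ from Definition~\ref{def:approx}, let $L\to\infty$) and of the trace-versus-Euclidean metric equivalence is sound, and the reduction of the second claim to Theorem~\ref{thm:areaLaw} is exactly as intended. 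The one proof the paper actually writes out in full (Appendix~\ref{app:cc}) takes a genuinely different route, via communication complexity: assuming all states in $\tilde\hh_L$ were approximable, the quantum fingerprinting states $\ket{h(\xx)},\ket{h(\yy)}$ for the equality problem could be replaced by their short classical descriptions, yielding an $O(\poly(\log n))$-bit classical protocol for $\Eq(n)$ and contradicting the $\Omega(\sqrt n)$ classical lower bound. The counting argument you give is shorter and more elementary but purely existential; the communication-complexity proof is more involved but semi-constructive, in that it points to a specific family of hard-to-describe area-law states (images of fingerprinting states), which the authors highlight as its main added value. One minor point worth making explicit in your write-up: the approximating states $\omega_N$ need not lie in $\tilde\hh_L$, so the describable states form a covering of the unit sphere of $\tilde\hh_L$ by external centres; projecting each centre onto the sphere turns this into a $2\varepsilon$-net of the same cardinality, after which the volumetric lower bound applies unchanged.
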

Theorem~\ref{thm:approx} can be easily proven using a counting argument of $\epsilon$-nets. Indeed, the number of states that can be parameterised by 
$O(\poly(L))$ many bits is at most $2^{O(\poly(L))}$. However, an $\epsilon$-net covering the space of pure states in $\cc^q$ requires at least $ (1/\varepsilon)^{\Omega(q)}$ elements  \cite{Nets}, which is much
 larger than $2^{O(\poly(L))}$ if $q=\exp(\Omega(\poly(L)))$ (see also Refs.~\cite{PQSV11,KBGKE11} on the topic of $\varepsilon$-nets for many-body states). We nevertheless also review the  
 more involved proof from Ref.~\cite{Kolmogorov} 
using communication complexity in Appendix~\ref{app:cc}. This proof could,  due to its more constructive nature, provide some insight into the structure of 
some strong area law states that cannot be approximated by polynomially classically described states.

\subsection{Tensor network states}
We saw that our definition of polynomial classical descriptions 
 encompasses all efficient tensor network descriptions. Thus, 

\begin{corollary}[Tensor network states cannot approximate area law states]\label{cor:TN}
	There exist translationally invariant strong area law states that cannot be approximated by polynomial tensor network states in the sense of Example~\ref{ex:tensor-network}. In particular, not all translationally invariant strong area law states can be approximated by polynomial PEPS or MERA states.
\end{corollary}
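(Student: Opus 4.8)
The plan is to obtain Corollary~\ref{cor:TN} by composing the three facts already established: the isometry $f$ of Theorem~\ref{thm:areaLaw}, the counting bound of Theorem~\ref{thm:approx}, and the inclusion of polynomial tensor networks into classically described states recorded in Example~\ref{ex:tensor-network}. First I would fix the candidate subspace by setting $\tilde\hh_L := f(\hh_L)\subset\hh$. Because $f$ is an injective isometry, $\dim\tilde\hh_L=\dim\hh_L\geq 2^{L^{D-1}}/L^{D-1}=\exp(\Omega(L^{D-1}))$, and since $D\geq 2$ is fixed, $L^{D-1}$ is a polynomial in $L$, so $\dim\tilde\hh_L=\exp(\Omega(\poly(L)))$. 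The point I would emphasise is that \emph{every} unit vector in this subspace is a translationally invariant strong area law state: any such vector equals $f(\ket{\phi_L})$ for some unit vector $\ket{\phi_L}\in\hh_L$ (as $f$ is linear onto its image and norm-preserving), and Theorem~\ref{thm:areaLaw} guarantees that $f(\ket{\phi_L})$ is translationally invariant and obeys a strong area law. Thus $\tilde\hh_L$ is an exponentially large subspace lying entirely inside the class of translationally invariant strong area law states.

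Next I would apply Theorem~\ref{thm:approx} directly to $\tilde\hh_L$. Its dimension is $\exp(\Omega(\poly(L)))$, and since the number of sites is $N=L^D$ with $D$ fixed, $\poly(L)$ and $\poly(N)$ are interchangeable, so the hypothesis is met. The theorem then yields a family of states in $\tilde\hh_L$ that cannot be approximated, in the sense of Definition~\ref{def:approx}, by polynomially classically described states. By the previous paragraph, each of these states is automatically translationally invariant and satisfies a strong area law, which already identifies the witnesses required by the corollary.

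Finally I would invoke Example~\ref{ex:tensor-network}, which places every polynomial tensor network state -- in particular every PEPS or MERA with $O(\poly(N))$ bond dimension and tensor entries of $O(\poly(N))$ Kolmogorov complexity -- inside the class of polynomially classically described states. The direction to handle carefully is that the approximators available from tensor networks form a \emph{subset} of those available from general classical descriptions; consequently, a state that cannot be approximated by \emph{any} polynomially classically described state can all the more not be approximated by the strictly more restrictive polynomial tensor network states, giving the corollary. I do not expect a genuine obstacle here: the substantive content lives in Theorems~\ref{thm:areaLaw} and~\ref{thm:approx} and in Example~\ref{ex:tensor-network}, and the only things to verify are that the dimension growth feeds correctly into the hypothesis of Theorem~\ref{thm:approx} and that inapproximability is inherited in the right direction along the inclusion of Example~\ref{ex:tensor-network}.
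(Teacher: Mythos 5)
Your proposal is correct and follows the same route as the paper: the corollary is obtained by combining Theorem~\ref{thm:approx} (instantiated on $\tilde\hh_L=f(\hh_L)$ from Theorem~\ref{thm:areaLaw}) with the observation of Example~\ref{ex:tensor-network} that polynomial tensor network states are a subclass of polynomially classically described states, so inapproximability by the larger class implies inapproximability by the smaller one. The paper states this in one line; you have merely spelled out the same steps in more detail.
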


Notice 
the restriction 
to tensor networks whose tensor entries have a polynomial Kolmogorov complexity. This is required to ensure that the tensor network description is in fact polynomial. Indeed, a classical description depending on only polynomially many parameters $\lambda_1,\ldots,\lambda_{O(\poly(N))}$ (e.g., a PEPS with polynomial bond-dimension) is not necessarily already polynomial -- for the latter, it is also necessary that each of the $\lambda_i$ themselves can be stored efficiently. The notion of Kolmogorov complexity allows for the most general definition of tensor networks that can be stored with polynomial classical memory.

\subsection{Quantum circuits}
Example~\ref{ex:circuit} shows that states prepared by a polynomial quantum circuit with post-selected measurement results have a polynomial classical description. Thus,

\begin{corollary}[Post-selected quantum circuits cannot prepare area law states]\label{cor:circuits}
	There exist translationally invariant strong area law states that cannot be approximated by a polynomial quantum circuit with post-selection in the sense of Example~\ref{ex:circuit}. 
\end{corollary}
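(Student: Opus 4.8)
The plan is to obtain Corollary~\ref{cor:circuits} as an immediate consequence of Theorem~\ref{thm:approx} together with the inclusion established in Example~\ref{ex:circuit}. The essential observation is that the class of states preparable by polynomial post-selected quantum circuits is a \emph{subclass} of the polynomially classically described states of Definition~\ref{def:classical}. Once this inclusion is in hand, the corollary follows by contraposition: any family of states that cannot be approximated by \emph{any} polynomially classically described state can in particular not be approximated by the smaller class arising from post-selected circuits.

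First I would make the inclusion precise. Given a circuit of $O(\poly(N))$ gates acting on $\ket0^{\otimes O(\poly(N))}$ with post-selected computational-basis measurements, I would describe a Turing machine that maintains the full (exponentially large) amplitude vector, applies each gate in turn, and at each post-selected measurement projects onto the recorded outcome and renormalises. Because the gate set and the post-selection pattern are fixed and of polynomial size, the size of this Turing machine is $O(\poly(N))$, even though its run-time is necessarily exponential; this is exactly what Definition~\ref{def:classical} permits. Hence $\ket\psi$ admits a polynomial classical description, which is precisely the content of Example~\ref{ex:circuit}.

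With the inclusion in place, I would invoke Theorem~\ref{thm:approx}. Concretely, the subspace $f(\hh_L)$ supplied by Theorem~\ref{thm:areaLaw} has dimension $\dim(\hh_L)\geq 2^{L^{D-1}}/L^{D-1}=\exp(\Omega(\poly(L)))$, and all of its elements are translationally invariant strong area law states; Theorem~\ref{thm:approx} then furnishes states in this subspace that cannot be approximated by polynomially classically described states. By the inclusion, these same states cannot be approximated, in the $\|\cdot\|_1$ sense of Definition~\ref{def:approx}, by any polynomial post-selected quantum circuit, which yields the claim.

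The only genuine subtlety — which is really the substance of Example~\ref{ex:circuit} rather than of the corollary itself — is ensuring that post-selection does not secretly enlarge the class beyond classically describable states. The point is that post-selection is handled deterministically by the simulating machine, which simply conditions on the prescribed outcomes and renormalises, so it adds no complexity to the \emph{description}, only to the run-time, which is unconstrained. I expect this to be the main place where care is needed; the remaining deduction is purely logical.
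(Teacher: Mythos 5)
Your proposal is correct and follows exactly the paper's route: the corollary is obtained as an immediate consequence of Theorem~\ref{thm:approx} applied to the translationally invariant strong area law subspace of Theorem~\ref{thm:areaLaw}, combined with the inclusion from Example~\ref{ex:circuit} that post-selected polynomial circuits yield polynomially classically described states via a simulating Turing machine of polynomial size (but unconstrained run-time). Your elaboration of why post-selection does not enlarge the class matches the intent of Example~\ref{ex:circuit}.
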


In the light of the computational power of post-selected quantum computation \cite{PostBQP}, this may be 
remarkable. 

\subsection{Eigenstates of local Hamiltonians}
Example~\ref{ex:hamiltonian} shows that eigenstates of local Hamiltonians with bounded interaction strengths also have a polynomial classical description. Thus, 

\begin{corollary}[Area law states without parent Hamiltonian]\label{cor:hamiltonian}
	There exist translationally invariant strong area law states that cannot be approximated by eigenstates of local Hamiltonians. 
\end{corollary}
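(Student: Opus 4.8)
The plan is to derive Corollary~\ref{cor:hamiltonian} as a direct consequence of Theorem~\ref{thm:approx} together with Example~\ref{ex:hamiltonian}, exactly mirroring the structure already used for Corollaries~\ref{cor:TN} and~\ref{cor:circuits}. The logical skeleton is a contrapositive chain: Example~\ref{ex:hamiltonian} establishes the implication ``eigenstate of a bounded-strength local Hamiltonian $\Rightarrow$ polynomial classical description'', and Theorem~\ref{thm:approx} (via Theorem~\ref{thm:areaLaw}) guarantees the existence of translationally invariant strong area law states lying outside the closure of polynomially classically described states. Composing these two facts yields states that cannot be eigenstates of local Hamiltonians, since any such eigenstate would, by Example~\ref{ex:hamiltonian}, admit a polynomial classical description and hence be approximable in the sense of Definition~\ref{def:approx}.

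Concretely, I would first invoke Theorem~\ref{thm:areaLaw} to fix the family of translationally invariant strong area law states $f(\ket{\phi_L})=\ket{\psi_L}$, whose image spans a subspace $\tilde\hh_L$ of dimension $\exp(\Omega(L^{D-1}))=\exp(\Omega(\poly(L)))$. Applying Theorem~\ref{thm:approx} to this $\tilde\hh_L$ then produces, for each $L$, a strong area law state $\ket{\psi_L}\in\tilde\hh_L$ that is \emph{not} approximable by polynomially classically described states. The second step is to argue by contradiction: suppose one of these non-approximable states were (approximated by) an eigenvector of a local Hamiltonian with bounded interaction strength. Example~\ref{ex:hamiltonian} tells us the Hamiltonian has polynomial Kolmogorov complexity and that brute-force diagonalisation furnishes a polynomial classical description of the eigenstate. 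This contradicts non-approximability, completing the argument.

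One technical point worth spelling out is the passage from ``exact eigenstate'' to ``approximated by eigenstates'', which is what Definition~\ref{def:approx} actually demands. Here I would note that if a family $\rho_N$ is approximated in trace norm by eigenstates $\omega_N$ of bounded-strength local Hamiltonians, then each $\omega_N$ inherits a polynomial classical description from Example~\ref{ex:hamiltonian}, so the approximating family itself is polynomially classically described; thus approximability by such eigenstates implies approximability by polynomially classically described states, and the contrapositive of Theorem~\ref{thm:approx} applies verbatim. This keeps the corollary at the level of approximation rather than exact representation and matches the formulation of the preceding two corollaries.

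The main obstacle is not in the logic, which is essentially a one-line composition, but in justifying the quantitative claim embedded in Example~\ref{ex:hamiltonian} that a bounded-strength local Hamiltonian can be specified to fixed precision with polynomial Kolmogorov complexity. I expect the subtlety lies in making ``bounded interaction strength'' precise enough that the number of local terms is $O(\poly(N))$ and each coupling can be truncated to $O(\poly(N))$ bits while preserving the relevant eigenvector to within the target $\varepsilon$ in trace norm. Since the number of local interaction terms on $[L]^D$ is only $O(L^D)$ and each term acts on $O(1)$ sites, this is a routine but necessary bookkeeping step; assuming it, as Example~\ref{ex:hamiltonian} instructs, the corollary follows immediately.
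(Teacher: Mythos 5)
Your proposal is correct and takes essentially the same route as the paper, which derives the corollary in one line by composing Example~\ref{ex:hamiltonian} (eigenstates of bounded-strength local Hamiltonians admit polynomial classical descriptions) with Theorem~\ref{thm:approx}. Your additional care in handling the passage from exact eigenstates to approximation by eigenstates is a reasonable elaboration of a point the paper leaves implicit, but it does not change the argument.
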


\section{Isotropic states and area laws}
\label{sec:isotropic}
So far,  
the states 
in consideration 
 were translationally invariant but not isotropic. 
However, by taking the superposition of appropriate rotations of \eqref{fullstate}, one can alter the above argument such that all involved states are fully isotropic. 

\begin{theorem}[Isotropic and translationally invariant area law states]\label{thm:isotropic}
There exists an injective linear 
isometry $g:\mathcal G_L\rightarrow \hh$  with $\dim(\mathcal G_L) = \exp(\Omega(L^{D-1}))$
	 such
	that  for all  $\ket{\phi_L}\in \mathcal G_L$, $g(\ket{\phi_L})$ satisfies a strong area law and is isotropic and translationally invariant in all $D$ directions.
\end{theorem}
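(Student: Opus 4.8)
The plan is to isotropise the construction of Theorem~\ref{thm:areaLaw} by superposing its rotated copies over the $D$ coordinate directions, while pushing the symmetrisation one level down into the hyperplane state. First I would fix $\mathcal G_L\subseteq\hh_L\subseteq\vspan\{\ket1,\ket2\}^{\otimes L^{D-1}}$ to be the subspace of states on the $(D-1)$-dimensional qubit lattice that are invariant under the \emph{full} symmetry group of that lattice, i.e.\ under all $L^{D-1}$ translations together with the point group (axis permutations and reflections) of the $(D-1)$-cube. Since $D$ is a fixed constant, this group has order $L^{D-1}\cdot 2^{D-1}(D-1)!=O(L^{D-1})$, so an orbit-counting (Burnside) argument gives that the symmetrised computational-basis states, which are orthogonal (being supported on disjoint basis strings) and span $\mathcal G_L$, number at least $2^{L^{D-1}}/(L^{D-1}2^{D-1}(D-1)!)=\exp(\Omega(L^{D-1}))$. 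Hence $\dim(\mathcal G_L)=\exp(\Omega(L^{D-1}))$ as required.

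Next, for $\ket{\phi_L}\in\mathcal G_L$ and each direction $j\in\{1,\dots,D\}$, I would let $\ket{\psi_L^{(j)}}$ be the state obtained exactly as in \eqref{fullstate} but stacking the hyperplanes carrying $\ket{\phi_L}$ along direction $j$; equivalently, $\ket{\psi_L^{(j)}}$ is the lattice rotation of $\ket{\psi_L}$ that sends the $D$-th direction to the $j$-th. I then set
\begin{equation}
	g(\ket{\phi_L}):=D^{-1/2}\sum_{j=1}^{D}\ket{\psi_L^{(j)}}.
\end{equation}
Because $\ket{\phi_L}$ is already isotropic within each hyperplane, each $\ket{\psi_L^{(j)}}$ is invariant under every lattice symmetry fixing direction $j$ and under translations in all $D$ directions, whereas any axis permutation merely permutes the summands $\{\ket{\psi_L^{(1)}},\dots,\ket{\psi_L^{(D)}}\}$ among themselves and each axis reflection fixes the uniform hyperplane superposition; hence $g(\ket{\phi_L})$ is isotropic and translationally invariant in all $D$ directions.

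The structural fact making $g$ an injective isometry is that the summands for distinct directions are orthogonal: every computational-basis string occurring in $\ket{\psi_L^{(j)}}$ carries non-$\ket0$ entries precisely on one hyperplane perpendicular to direction $j$, and two hyperplanes perpendicular to different directions are never equal as site sets, so no basis string is shared between $\ket{\psi_L^{(j)}}$ and $\ket{\psi_L^{(j')}}$ for $j\neq j'$. Using that each direction-wise map $\ket{\phi_L}\mapsto\ket{\psi_L^{(j)}}$ is an isometry by Theorem~\ref{thm:areaLaw}, this orthogonality yields $\langle g(\phi)|g(\phi')\rangle=D^{-1}\sum_{j}\langle\phi|\phi'\rangle=\langle\phi|\phi'\rangle$, so $g$ is a linear isometry and in particular injective.

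Finally I would verify the strong area law, which is the step I expect to be the main obstacle. For a fixed cubic region $A$, subadditivity of the Schmidt rank under superposition gives $S_0(g(\phi)_A)\leq\log_2\sum_{j=1}^{D}2^{S_0((\psi_L^{(j)})_A)}$. Each term is controlled by Theorem~\ref{thm:areaLaw} applied to the rotated region: lattice rotations and reflections preserve boundary areas and map cubic regions to cubic regions, so $S_0((\psi_L^{(j)})_A)\leq|\partial A|$. As there are only $D=O(1)$ summands, this gives $S_0(g(\phi)_A)\leq|\partial A|+\log_2 D$, which is at most $c|\partial A|$ for a universal constant $c$ since $|\partial A|\geq1$. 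The whole argument therefore hinges on the number of rotated copies being a constant independent of $L$: only then does subadditivity of the Schmidt rank cost a mere additive $O(1)$ that can be absorbed into the area-law constant, rather than spoiling the bound.
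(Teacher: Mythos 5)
Your proposal is correct and follows essentially the same route as the paper: restrict $\ket{\phi_L}$ to the point-group-symmetric (mirror-symmetric) subspace of $\hh_L$, superpose the $D$ rotated copies of $\ket{\psi_L}$ with weight $D^{-1/2}$, and note that the restriction costs only a constant factor in dimension. The only cosmetic difference is in the area-law step, where you invoke subadditivity of the Schmidt rank under superposition (absorbing the additive $\log_2 D$), whereas the paper shows the cross terms $\tr_{\bar A}(\Rot_j\ketbra{\psi_L}{\psi_L}\Rot_k^\dagger)$ vanish so that the reduced state is a mixture of the individual rotated reduced states; both yield the same bound.
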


The details of this construction are given in Appendix~\ref{app:isotropicProof}.

\begin{corollary}[{Approximation for isotropic states}]\label{cor:isotropic}
There exist isotropic and translationally invariant strong area law states that cannot be approximated by polynomially classically described states. In particular, Corollary~\ref{cor:TN}--\ref{cor:hamiltonian} also hold for isotropic and translationally invariant states.
\end{corollary}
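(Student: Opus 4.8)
The plan is to obtain Corollary~\ref{cor:isotropic} by feeding the exponentially large isotropic subspace produced by Theorem~\ref{thm:isotropic} into the abstract non-approximability statement of Theorem~\ref{thm:approx}. Concretely, I would set $\tilde\hh_L := g(\mathcal G_L)$, the image of the isometry $g$ from Theorem~\ref{thm:isotropic}. Since $g$ is an injective linear isometry, $\dim(\tilde\hh_L)=\dim(\mathcal G_L)=\exp(\Omega(L^{D-1}))$, which is of the form $\exp(\Omega(\poly(L)))$ required by Theorem~\ref{thm:approx} (as $D$ is fixed, $L^{D-1}$ is polynomial in $L$). Hence Theorem~\ref{thm:approx} guarantees the existence of states $\rho_L\in\tilde\hh_L$ that cannot be approximated by polynomially classically described states in the sense of Definition~\ref{def:approx}; note that the non-approximability there is against \emph{all} polynomially described states, not merely those lying inside the subspace.

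The second key step is to observe that every state in $\tilde\hh_L=g(\mathcal G_L)$ is, by the defining property of $g$ in Theorem~\ref{thm:isotropic}, simultaneously isotropic, translationally invariant in all $D$ directions, and a strong area law state. In particular, the non-approximable states $\rho_L$ just produced inherit all three properties, which establishes the first assertion of the corollary.

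For the ``in particular'' clause I would simply re-run the arguments of Corollaries~\ref{cor:TN}--\ref{cor:hamiltonian} with the isotropic witnesses $\rho_L$ in place of the merely translationally invariant ones. Those corollaries rely only on the set inclusions supplied by Examples~\ref{ex:tensor-network}--\ref{ex:hamiltonian}: polynomial tensor networks (PEPS, MERA), post-selected polynomial quantum circuits, and eigenstates of bounded-strength local Hamiltonians are all polynomially classically described states. Since $\rho_L$ cannot be approximated by any polynomially classically described state, it cannot a fortiori be approximated by any of these more restrictive families, and isotropy is carried along for free. No new estimate is needed here.

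The genuine difficulty is not in the corollary at all but in its main input, Theorem~\ref{thm:isotropic}, whose proof is deferred to Appendix~\ref{app:isotropicProof}. Were I to prove that theorem from scratch, I would start from the construction $\ket{\psi_L}$ of Theorem~\ref{thm:areaLaw}, which singles out the $D$-th lattice direction, and symmetrise it by superposing its images under the rotations and reflections of the hypercubic point group so as to restore isotropy. The hard part would be twofold: ensuring that the symmetrisation does not destroy the strong area law -- each distinguished direction contributes additively to the Schmidt rank across a cut, so the bound should remain $O(|\partial A|)$, but the orthogonality of the rotated components must be controlled -- and guaranteeing that the encoded subspace $\mathcal G_L$ stays of dimension $\exp(\Omega(L^{D-1}))$, i.e.\ that $g$ remains injective after symmetrisation. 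Given Theorem~\ref{thm:isotropic} as stated, however, the corollary follows immediately from the two-step reduction above.
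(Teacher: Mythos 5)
Your proposal is correct and matches the paper's (implicit) argument exactly: the corollary is obtained by applying Theorem~\ref{thm:approx} to the exponentially large subspace $g(\mathcal G_L)$ furnished by Theorem~\ref{thm:isotropic}, with the ``in particular'' clause following because the families in Examples~\ref{ex:tensor-network}--\ref{ex:hamiltonian} are all subsumed under polynomially classically described states. Your closing sketch of the symmetrisation over the hypercubic point group also correctly anticipates the strategy of Appendix~\ref{app:isotropicProof}.
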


\section{Decaying correlations and area laws}
One might wonder whether an exponentially dimensional subspace of strong area law states can be constructed while imposing decaying two-point correlations for distant observables, a property known to occur in ground states of local gapped Hamiltonians \cite{HastingsKoma06}. It follows immediately from their definition
that the states constructed in Theorem~\ref{thm:areaLaw} and \ref{thm:isotropic} already satisfy an algebraic decay
\begin{align}
	\bra{\psi_L}AB\ket{\psi_L} - \bra{\psi_L}A\ket{\psi_L}\!\bra{\psi_L}B\ket{\psi_L} &= O(L^{-1})\nonumber \\
	&=O(\ell^{-1})\label{eqn:corAlgebraic}
\end{align}
for local observables $A,B$ on disjoint supports, where 
$\ell$ is the distance between their supports. Using quantum error-correcting codes, it is however also possible to construct variations of the previous results where all states involved have vanishing two-point correlations of local observables with disjoint support.

To see this, consider a non-degenerate $[n,k,\Delta]$-quantum error-correcting code $C$ with $k/n=\Theta(1)$ and $\Delta/n=\Theta(1)$ \cite{Gottesman96}. Since $C$ is non-degenerate, the reduced density matrix of any $\Delta-1$ qubits of any state in the code space of $C$ is maximally mixed. 
By choosing $n=L^{D-1}$ and considering
\begin{equation} \label{eqn:corSimple}
	\ket{\psi_L} := \ket{C(\chi_L)}\otimes\ket{0}^{(L-1)L^{D-1}},
\end{equation}
where $\ket{C(\chi_L)}$ is an arbitrary state vector in the code space of $C$,
we see that 
\begin{equation} \label{eqn:cor_zero}
	\bra{\psi_L}AB\ket{\psi_L} - \bra{\psi_L}A\ket{\psi_L}\!\bra{\psi_L}B\ket{\psi_L}=0
\end{equation}
for any observables $A,B$ with disjoint support and whose joint support in the top hyperplane is less than $\Delta=\Theta(L^{D-1})$. In particular, Eq.~\eqref{eqn:cor_zero} holds for local observables $A,B$. Clearly, states of the form \eqref{eqn:corSimple} obey a strong area law and since $k=\Theta(L^{D-1})$, we obtain a subspace of dimension $\exp(\Omega(L^{D-1}))$ of strong area law states with vanishing correlations of local observables. 

\begin{corollary}[{Approximation for area law states with vanishing correlations of local observables}]\label{cor:correlations}
There exist strong area law states with vanishing two-point correlations of local observables on disjoint supports that cannot be approximated by polynomially classically described states. In particular, Corollary~\ref{cor:TN}--\ref{cor:hamiltonian} also hold for states with vanishing correlations of local observables on disjoint supports.
\end{corollary}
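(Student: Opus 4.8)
The plan is to package the construction in Eq.~\eqref{eqn:corSimple} into an explicit subspace of the right dimension and then feed it into Theorem~\ref{thm:approx}. Concretely, I would set
\[
	\tilde\hh_L := \vspan\bigl\{ \ket{C(\chi_L)}\otimes\ket{0}^{(L-1)L^{D-1}} : \chi_L\in(\cc^2)^{\otimes k}\bigr\}\subset\hh,
\]
the image of the logical space under the stabiliser encoding of $C$, tensored with the fixed padding $\ket{0}^{(L-1)L^{D-1}}$. Since the stabiliser encoding is a linear isometry and tensoring with a fixed vector is linear, this map is an injective isometry, so $\tilde\hh_L$ is genuinely a subspace with $\dim\tilde\hh_L = 2^k$. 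With $n=L^{D-1}$ and $k/n=\Theta(1)$ this gives $\dim\tilde\hh_L = \exp(\Omega(L^{D-1})) = \exp(\Omega(\poly(L)))$, so Theorem~\ref{thm:approx} applies and produces a state in $\tilde\hh_L$ that cannot be approximated by polynomially classically described states.

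The crux is then to argue that \emph{every} normalised vector in $\tilde\hh_L$ -- and in particular the non-approximable one produced by Theorem~\ref{thm:approx} -- is a strong area law state with vanishing correlations. Both properties must hold uniformly over the whole subspace, and this is exactly where the code structure enters. For the area law I would use that any $\ket{\psi_L}\in\tilde\hh_L$ has the form $\ket{C(\chi_L)}\otimes\ket{0}^{(L-1)L^{D-1}}$ for some codeword $\chi_L$, so all entanglement is confined to the top hyperplane of $L^{D-1}$ qubits while the rest of the lattice is a product $\ket{0}$ state. Across any cube $A=[l_1]\times\cdots\times[l_D]$ the Schmidt rank is therefore at most $2^{l_1\cdots l_{D-1}}$, giving $S_0((\psi_L)_A)\leq l_1\cdots l_{D-1}\leq |\partial A|/2$, a strong (indeed sub-) area law with the same constant for all of $\tilde\hh_L$. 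For vanishing correlations I would invoke non-degeneracy of $C$: for \emph{any} codeword the reduced state on fewer than $\Delta$ qubits of the top hyperplane is the maximally mixed state. Hence for local observables $A,B$ on disjoint supports, whose joint support in the top hyperplane has size $O(1)<\Delta=\Theta(L^{D-1})$ for large $L$, the reduced state on their joint support is a normalised identity; this is precisely the computation behind Eq.~\eqref{eqn:cor_zero} and forces $\bra{\psi_L}AB\ket{\psi_L}=\bra{\psi_L}A\ket{\psi_L}\bra{\psi_L}B\ket{\psi_L}$. Observables touching the padding hyperplanes contribute only trivially, since those are in a product state.

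Combining the two steps yields the first claim. The ``in particular'' statement then follows with no further work: polynomial tensor networks (Example~\ref{ex:tensor-network}), polynomial post-selected quantum circuits (Example~\ref{ex:circuit}), and eigenstates of local Hamiltonians (Example~\ref{ex:hamiltonian}) are all special cases of polynomially classically described states, so the non-approximability just established immediately specialises to reproduce Corollaries~\ref{cor:TN}--\ref{cor:hamiltonian} within the subclass of states with vanishing correlations.

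I expect the only genuine subtlety to be the uniformity over the subspace. The non-approximable state delivered by Theorem~\ref{thm:approx} is not under our control, so the area-law and correlation estimates cannot be tailored to it and must instead be established for all of $\tilde\hh_L$ at once. This is exactly what the two code-theoretic inputs secure: linearity and isometry of the encoding guarantee that $\tilde\hh_L$ is a subspace whose every element is a codeword-padded state, and non-degeneracy is a property of the whole code space rather than of a single codeword. The remaining ingredient -- the existence of non-degenerate $[n,k,\Delta]$ codes with $k/n=\Theta(1)$ and $\Delta/n=\Theta(1)$ -- I would take from standard quantum coding theory \cite{Gottesman96}.
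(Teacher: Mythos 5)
Your proposal is correct and follows essentially the same route as the paper: it packages the codeword-padded states of Eq.~\eqref{eqn:corSimple} into a subspace of dimension $2^k=\exp(\Omega(L^{D-1}))$, feeds it into Theorem~\ref{thm:approx}, and uses confinement of entanglement to the top hyperplane for the area law and non-degeneracy of the code for vanishing correlations. The only difference is that you make explicit the uniformity over the subspace and the factorisation $\langle AB\rangle=\langle A\rangle\langle B\rangle$ from the maximally mixed marginal, details the paper leaves implicit.
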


While the translationally and rotationally invariant construction only gives algebraic decay (Eq.~\eqref{eqn:corAlgebraic}), we conjecture that there also exist strong area law states which are translationally and rotationally invariant and simultaneously have exponentially small correlations for all local observables but still cannot be approximated by polynomially classically described states.

\section{Conclusion and outlook}
In this work, we {have shown} that the set of states satisfying an area law in $D\geq 2$ comprises many states that do not have an efficient classical description: They can be neither approximated by efficient tensor network states, nor using polynomial quantum circuits with post-selected measurements, and are also not eigenstates of local Hamiltonians. 
We have hence proven that the connection between entanglement properties  
and the possibility of an efficient classical description 
is far more intricate than anticipated.
These results are based on the simple observation that an arbitrary quantum state in $D-1$ dimensions that is embedded into $D$ dimensions satisfies a $D$-dimensional area law, implying that the set of area law states contains a subspace of exponential dimension. 
That is to say, in 
$D\geq 2$,
one has the 
freedom to ``dilute'' the entanglement content, in order to still arrive
at area laws. 
We also demonstrated that this exponential scaling persists if various physical properties, such as translational and rotational invariance, or decaying correlations of local observables, are imposed. We note however that whilst the latter can be extended to non-local observables of size $O(L^{D-1})$, our notion of decaying correlations is weaker than the exponential clustering property involving all regions 
valid for ground states of gapped Hamiltonians \cite{HastingsKoma06,BH13}. It remains open whether our results are impeded if the stronger notion of exponential clustering of correlations is imposed.

Area laws indeed suggest the expected correlation patterns of naturally occurring ground states, but when
put in precise contact with questions of numerical simulation, it turns out that satisfying an area law alone is not sufficient
for efficient approximation. Picking up  
the metaphor of the introduction, the
``corner of states that can be efficiently described'' is tiny compared to the ``physical corner'' {(Fig.~\ref{fig:corner})}.

\begin{figure}[t]
	\subfloat[][]{
		\includegraphics[width=.23\textwidth]{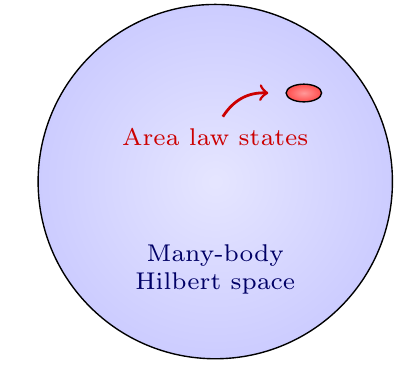}
		\label{subfig:corner1}
	}
	\subfloat[][]{
		\includegraphics[width=.23\textwidth]{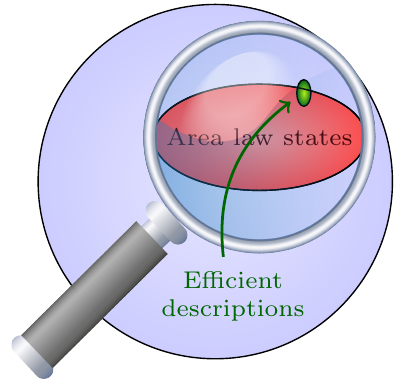}
		\label{subfig:corner2}
	}
	\caption{(a) The set of area law states is a tiny "corner" of the many-body Hilbert space. (b) The set of states that can be efficiently described is tiny compared to the "corner" of area law states.}
	\label{fig:corner}
\end{figure}

The construction using 
communication complexity exhibits a semi-explicit class of area law states 
without
a short classical description. This can be taken as a starting point for further investigation with the aim to identify additional criteria of ``physical'' states whose imposition supplementary to area laws could reduce the ``{physical} corner'' to sub-exponential 
{size}. 

A particularly exciting perspective 
arises 
from the observation that states with  
small 
entanglement content can go along with states having divergent bond dimensions in PEPS approximations. 
This may be taken as a suggestion that there may be states that are in the same phase if symmetries are imposed, but 
are being classified as being in different phases in a classification of phases of matter 
building upon tensor network descriptions \cite{ClassifyingPhasesWen,ClassifyingPhasesPollmann,ClassifyingPhasesSchuch}.
It is the hope that the present work can be taken as a starting point of further endeavours towards
understanding the complexity of quantum many-body states.

\acknowledgements
	We thank J.~I.~Cirac,  A.~Ferris, M.~Friesdorf, C.~Gogolin, Y.-K.~Liu, A.~Moln\'{a}r, X.~Ni, N.~Schuch, and H.~Wilming for helpful discussions.
 We acknowledge funding from the BMBF, the 
 EU (RAQUEL, SIQS, COST, AQuS), and the ERC (TAQ).

\appendix

\section{Proof of Theorem~\ref{thm:approx} using communication complexity}\label{app:cc}

Suppose two distant parties, Alice and Bob, each possess an $n$-bit string, $\xx$ and $\yy$, respectively. No communication between Alice and Bob is 
allowed, but they can communicate with a third party, Charlie, whose task 
is to guess whether or not $\xx = \yy$. We demand that Charlie may guess the wrong answer with a small (fixed) probability of at most $\delta>0$. This is called the \emph{equality problem}, which we denote by $\Eq(n)$. 
We now state some known results \cite{Kolmogorov,NewmanSzegedy96,QuantumFingerprinting} on the communication complexity, i.e. the minimum amount of communication required 
for solving the equality problem.

\begin{lemma}[Equality problem for classical communication]\label{lem:cc-classical}
	If Alice and Bob can only send classical information to Charlie, at least $\Omega(\sqrt n)$ bits of communication are required to solve $\Eq(n)$.
\end{lemma}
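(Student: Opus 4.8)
The plan is to reduce the private-coin simultaneous-message lower bound to the much easier \emph{deterministic} version of the same model, paying only a quadratic factor. Concretely, I would prove two statements and combine them: (i) any deterministic protocol in which Alice and Bob each send a single message to Charlie requires $\Omega(n)$ bits to solve $\Eq(n)$; and (ii) any private-coin randomised protocol of cost $c$ and error $\delta<1/2$ can be simulated by a deterministic protocol of cost $O(c^2)$. Together these give $c^2=\Omega(n)$, hence $c=\Omega(\sqrt n)$.

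Step (i) is a standard fooling-set argument. If Alice's deterministic message map $\xx\mapsto a(\xx)$ were not injective, say $a(\xx)=a(\xx')$ with $\xx\neq\xx'$, then on the inputs $(\xx,\xx)$ and $(\xx',\xx)$ Charlie receives the identical pair $(a(\xx),b(\xx))$ while the correct answers differ; so $a$ must be injective and Alice's message has length at least $n$.

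Step (ii) is the crux and is where I would spend the effort. I would model the randomised protocol by the distributions $A_\xx,B_\yy$ over the message alphabet $\{0,1\}^{c}$ together with a test $T\in[0,1]^{2^c\times 2^c}$, so that the acceptance probability is $\mathrm{acc}(\xx,\yy)=\sum_{a,b}A_\xx(a)B_\yy(b)T(a,b)$, with a constant gap between the equal and unequal cases. The key observation is that Charlie only ever uses $A_\xx$ through the $2^c$ linear functionals $a\mapsto T(a,b)$. Hence each player can \emph{locally} derandomise: for fixed $\xx$, drawing $t=\Theta(c)$ samples from $A_\xx$ yields an empirical distribution $\hat A$ whose averages $\sum_a\hat A(a)T(a,b)$ are within a small constant of the true values for every $b$ simultaneously, by Hoeffding together with a union bound over the $2^c$ columns. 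Crucially this union bound ranges over messages, not over the $2^{2n}$ input pairs, which is exactly why one loses only a quadratic and not an exponential factor. Alice therefore searches (she is computationally unbounded) for such a good multiset $S_\xx$ of $t$ messages and sends it, costing $t\,c=O(c^2)$ bits; Bob does the same for $B_\yy$; and Charlie evaluates the empirical acceptance $\frac{1}{t^2}\sum_{a\in S_\xx,\,b\in S_\yy}T(a,b)$, which a short telescoping estimate shows approximates $\mathrm{acc}(\xx,\yy)$ to within the gap, so thresholding at $1/2$ recovers the answer deterministically.

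The main obstacle is precisely this decoupling in step (ii): a naive derandomisation that fixes one shared choice of randomness works only after amplifying the error below $2^{-2n}$, which would force $c=\Omega(n)$ and is both too strong and circular. The resolution -- letting each party approximate its own message distribution \emph{as seen through} Charlie's $2^c$ tests, so that the probabilistic argument is per-input and the union bound is over the small message space -- is the essential idea (in the spirit of Babai--Kimmel and Ref.~\cite{NewmanSzegedy96}). I would finally double-check the constants in the Hoeffding and union-bound step and in Charlie's two-stage approximation to ensure the surviving gap strictly exceeds the accumulated error.
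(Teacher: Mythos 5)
Your argument is correct, but note that the paper does not actually prove Lemma~\ref{lem:cc-classical}: it is quoted as a known result with a pointer to Ref.~\cite{NewmanSzegedy96}, so there is no in-paper proof to compare against. What you have reconstructed is essentially the Babai--Kimmel route to that bound, and it is sound. Step (i), the $\Omega(n)$ deterministic simultaneous-message lower bound via injectivity of Alice's message map, is fine. In step (ii), the decisive point --- which you identify correctly --- is that each player only needs to approximate their message distribution \emph{as seen through} the at most $2^{c+1}$ linear tests $a\mapsto T(a,b)$, so the union bound in the Hoeffding step ranges over messages rather than over the $2^{2n}$ input pairs, and $t=\Theta(c/\varepsilon^2)$ samples suffice; the naive shared-randomness fixing you warn against would indeed be circular. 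Your two-stage triangle inequality for Charlie's empirical estimate also goes through: Alice's guarantee handles each fixed $b\in S_\yy$, and Bob's guarantee averaged against $A_\xx$ handles the remaining half, giving total error $2\varepsilon$, which can be made smaller than the gap $\tfrac12-\delta$; combining $c^2=\Omega(n)$ with step (i) yields $c=\Omega(\sqrt n)$. The only points worth making explicit in a final write-up are that the protocol is private-coin (as the paper's setup implicitly assumes; with coins shared between Alice and Bob, $\Eq(n)$ costs $O(1)$) and that the messages are one-way and simultaneous, since the $\Omega(\sqrt n)$ bound is specific to that model.
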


\begin{lemma}[Quantum solution to equality problem]\label{lem:cc-quantum}
\ 
	\begin{enumerate}[(i)]
	\item \label{lem:cc-quantum-protocol}
	If Alice and Bob can send quantum information to Charlie, there exists a protocol for $\Eq(n)$ using only $O(\log n)$ qubits of communication that is of the following form: 
	Alice and Bob each prepare $\ket{h(\xx)}$ and $\ket{h(\yy)}$ of $O(\log n)$-qubits, respectively, 
	which they send to Charlie. Charlie then applies a quantum circuit 
	to $\ket{h(\xx)}\ket{h(\yy)}\ket 0$, followed by a measurement of a single qubit whose outcome determines Charlie's guess. 
	\item\label{lem:cc-quantum-epsilon}
	There exists an $\varepsilon>0$ independent of $n$ such that the protocol in (\ref{lem:cc-quantum-protocol}) still works if instead,  Alice and Bob send states to Charlie which are $\varepsilon$-close  in trace distance\footnote{This was argued in Ref.\ \cite{Kolmogorov} for the Euclidean vector distance but it is clear that the same holds for the trace distance.} to $\ket{h(\xx)}$ and $\ket{h(\yy)}$.
	\end{enumerate}
\end{lemma}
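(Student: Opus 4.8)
The plan is to prove both parts via the quantum fingerprinting construction of Ref.~\cite{QuantumFingerprinting}, which encodes each $n$-bit string into an $O(\log n)$-qubit state whose pairwise overlaps sharply separate equal from unequal inputs, combined with a SWAP test performed by Charlie; the robustness in part~(\ref{lem:cc-quantum-epsilon}) will then follow from contractivity and subadditivity of the trace distance.

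For part~(\ref{lem:cc-quantum-protocol}), I would first fix a classical error-correcting code $E\colon\{0,1\}^n\to\{0,1\}^m$ with $m=O(n)$ and relative minimum distance bounded below by a constant $\delta_0>0$ (e.g.\ a Gilbert--Varshamov or Justesen code), and define the fingerprint
\begin{equation}
	\ket{g(\xx)}:=\frac{1}{\sqrt m}\sum_{i=1}^{m}(-1)^{E(\xx)_i}\ket{i},
\end{equation}
a state on $\lceil\log_2 m\rceil=O(\log n)$ qubits. A direct computation gives $\langle g(\xx)|g(\yy)\rangle=1-2\,d_H(E(\xx),E(\yy))/m$ with $d_H$ the Hamming distance, so $\langle g(\xx)|g(\xx)\rangle=1$ while $|\langle g(\xx)|g(\yy)\rangle|\leq 1-2\delta_0$ whenever $\xx\neq\yy$. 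On input $\ket{g(\xx)}\ket{g(\yy)}\ket 0$, the SWAP test (Hadamard on the ancilla, controlled-SWAP of the two registers, Hadamard) leaves the ancilla equal to $1$ with probability $\tfrac12\bigl(1-|\langle g(\xx)|g(\yy)\rangle|^2\bigr)$, which is exactly $0$ when $\xx=\yy$ and at least $\tfrac12\bigl(1-(1-2\delta_0)^2\bigr)=:\gamma>0$ when $\xx\neq\yy$. To push the error below the prescribed $\delta$ I would take a constant number $t=O(1)$ of copies, $\ket{h(\xx)}:=\ket{g(\xx)}^{\otimes t}$ (still $O(\log n)$ qubits), let Charlie run the $t$ SWAP tests coherently, and have him output ``not equal'' iff at least one yields $1$, writing this OR into a single output qubit by a reversible circuit and measuring only that qubit. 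The error is then one-sided: never wrong when $\xx=\yy$ and at most $(1-\gamma)^t$ when $\xx\neq\yy$, so a constant $t$ independent of $n$ makes it at most $\delta$, yielding exactly the claimed form with $O(\log n)$ qubits of communication.

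For part~(\ref{lem:cc-quantum-epsilon}), write $\rho_\xx:=\ketbra{h(\xx)}{h(\xx)}$. The crucial observation is that Charlie's procedure -- a fixed unitary circuit followed by a single-qubit measurement -- is a fixed two-outcome measurement, so his guess probability is a linear functional of the input state that varies by at most the trace distance between inputs, by monotonicity of the trace distance under quantum channels together with its operational characterisation for measurements. Suppose now Alice and Bob send (possibly mixed) states $\tilde\rho_A,\tilde\rho_B$ within trace distance $\varepsilon$ of $\rho_\xx$ and $\rho_\yy$. Since the trace distance is subadditive under tensor products, $\tilde\rho_A\otimes\tilde\rho_B$ lies within trace distance $2\varepsilon$ of $\rho_\xx\otimes\rho_\yy$, so Charlie's guess probability shifts by at most $2\varepsilon$. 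Choosing $\varepsilon$ so that $2\varepsilon$ plus the residual error $(1-\gamma)^t$ stays below $\delta$ -- a threshold depending only on the constant gap $\gamma$, hence independent of $n$ -- establishes the robust version.

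The main obstacle I anticipate is not any single calculation but the bookkeeping needed to realise the amplified test as a genuine single-qubit measurement while keeping the communication at $O(\log n)$: one must check that the $t$-fold OR can be computed reversibly on the received registers without extra communication, and that the constant $t$ (and thus the admissible $\varepsilon$) is fixed uniformly in $n$. Once the constant overlap gap from the code -- and hence $\gamma$ -- is secured, the contractivity and subadditivity of the trace distance make part~(\ref{lem:cc-quantum-epsilon}) essentially automatic.
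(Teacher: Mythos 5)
The paper does not prove this lemma at all: it is quoted as a known result, with part~(i) delegated to the quantum fingerprinting construction of Ref.~\cite{QuantumFingerprinting} (and Refs.~\cite{Kolmogorov,NewmanSzegedy96}) and part~(ii) to a remark in Ref.~\cite{Kolmogorov}. Your proposal reconstructs exactly that standard argument -- code-based fingerprints, SWAP test, constant amplification, and robustness via contractivity and subadditivity of the trace distance -- so in spirit you are following the same route the paper points to. Part~(ii) and the amplification bookkeeping (deferred measurement, reversible OR into a single output qubit, one-sided error, constants uniform in $n$) are all fine.

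There is, however, one genuine gap in part~(i): the phase-encoded fingerprint $\ket{g(\xx)}=m^{-1/2}\sum_i(-1)^{E(\xx)_i}\ket i$ has overlap $1-2\,d_H(E(\xx),E(\yy))/m$, and a lower bound $d_H\geq\delta_0 m$ on the minimum distance only gives $\langle g(\xx)|g(\yy)\rangle\leq 1-2\delta_0$; it does \emph{not} bound the absolute value away from $1$. If two codewords are complementary (which a generic linear code with constant relative distance does not forbid -- e.g.\ any linear code containing the all-ones word has such pairs), then $d_H=m$, the overlap is $-1$, the two fingerprints coincide up to a global phase, and the SWAP test accepts with probability~$1$, so Charlie is fooled on that pair and your $\gamma>0$ does not exist. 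You need the pairwise distances bounded away from $m$ as well as from $0$. The cleanest fixes are either to use the original encoding of Ref.~\cite{QuantumFingerprinting}, $\ket{h(\xx)}=m^{-1/2}\sum_i\ket i\ket{E(\xx)_i}$, whose overlap $1-d_H/m$ lies in $[0,1-\delta_0]$ and is automatically nonnegative, or to insist on a balanced code with all pairwise distances in $[\delta_0 m,(1-\delta_0)m]$ (obtainable, e.g., by concatenating each codeword with its complement). With that repair the rest of your argument, including part~(ii), goes through.
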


We now turn to the proof of Theorem~\ref{thm:approx}.

\begin{proof}[Proof of Theorem~\ref{thm:approx}]
We prove the claim by contradiction. Suppose that every state vector in $\tilde\hh_L$ can be approximated by polynomially classically described states. Then in particular, all $M$-qubit states can be approximated by states with a classical description of length $O(\poly(N))$, where $M:=\lfloor\log_2\dim(\tilde\hh_L)\rfloor$. 
Fix $\delta\in(0,1)$ and let $\varepsilon>0$ be as in Lemma~\ref{lem:cc-quantum} (\ref{lem:cc-quantum-epsilon}). By Lemma~\ref{lem:cc-quantum} (\ref{lem:cc-quantum-protocol}), we can choose $n$ with $\log n = \Theta(M)$ such that $M$ qubits of communication suffice to solve $\Eq(n)$. 

By assumption, $\ket{h(\xx)}$ and $\ket{h(\yy)}$ can be $\varepsilon$-approximated by states which have {an} 
$O(\poly(M))$ classical description. By Lemma~\ref{lem:cc-quantum} (\ref{lem:cc-quantum-epsilon}), these states can be used instead of $\ket{h(\xx)}$ and $\ket{h(\yy)}$ in the quantum protocol to solve $\Eq(n)$. 
Now consider an alternative protocol using only classical communication to solve $\Eq(n)$ as follows: Alice and Bob send the classical description of their states to Charlie, who simulates the quantum circuit and the measurement  from Lemma~\ref{lem:cc-quantum} using the classical descriptions of the states. This protocol solves $\Eq(n)$ using only $O(\poly(M))=O(\poly(\log n))$ bits of communication, contradicting Lemma~\ref{lem:cc-classical}. 
Finally, by setting $\tilde\hh_L := f(\hh_L)$ with $f$ and $\hh_L$ as in Theorem~\ref{thm:areaLaw}, 
the second part of Theorem~\ref{thm:approx} follows.
\end{proof}

\section{Proof of Theorem~\ref{thm:isotropic}} \label{app:isotropicProof}

Theorem~\ref{thm:isotropic} can be proven with a minor modification of the proof of Theorem~\ref{thm:areaLaw}. 
To start with, we replace $\ket{\phi_L}$ for each $L$ by a mirror
symmetric state vector on the translationally invariant subset $\hh_L\subset (\cc^2)^{\otimes L^{D-1}}$. 
We then consider for the entire $[L]^D$ lattice state vectors of the form
\begin{equation}
	\ket{\Psi_L} := D^{-1/2} \sum_{j=1}^D {\cal R}_j \ket{\psi_L} ,
\end{equation}
where $\id = {\cal R}_1,\ldots, {\cal R}_D$ rotate the entire lattice system such that $|\phi_L\rangle$ is arranged along each line of the cubic lattice in dimension $D$.
Such a state is translationally invariant and isotropic, following from mirror symmetry. These states satisfy a strong area law: For any cubic subset $A\subset [L]^D$,
\begin{equation}
	(\psi_L)_A = D^{-1} \sum_{j=1}^D \tr_{\bar{A}}({\cal R}_j |\psi_L\rangle \langle \psi_L| {\cal R}_j^\dagger),
\end{equation}
since for $j\neq k$,
\begin{equation}
	 \tr_{\bar{A}}({\cal R}_j |\psi_L\rangle \langle \psi_L| {\cal R}_k^\dagger)=0.
\end{equation}
This can be seen by taking the partial trace with respect to a set $C$ first. For simplicity of notation, 
for $D=2$,
consider w.l.o.g.\ distinguished subsets $A\subset [L]^D$ for which $A\cap C=\emptyset$ for $C:= L\times [L]$.
Then, 
\begin{align}
	 \tr_{\bar{A}}( |\psi_L\rangle \langle \psi_L| {\cal R}_2^\dagger)&= \tr_{{\bar{A}\backslash C }}
	 \sum_{\xx\in S}\langle \xx | \psi_L\rangle\langle \psi_L| {\cal R}_2^\dagger|\xx\rangle\nonumber
	 \\ &
	 =0,
\end{align}
where $S=\{\xx|\,\exists j: x_j\neq 0 \wedge x_k=0\, \forall k\in [L]\backslash\{ j\} \}$. An analogous argument holds for any dimension $D$. From these considerations, 
it follows that the area law is inherited by the 
area law valid for each individual ${\cal R}_j |\psi_L\rangle$.
It is furthermore clear that the exponential scaling of the dimension is not affected by restricting to the subspace $\mathcal G_L\subset \hh_L$ of mirror symmetric states.
\qed

\end{document}